\let\IEEEproof\proof
\let\IEEEendproof\endproof
\let\proof\@undefined
\let\endproof\@undefined
\let\proof\IEEEproof
\let\endproof\IEEEendproof
\newtheorem{lemma}{Lemma}
\theoremstyle{definition}
\begin{document}
%
\title{A Fixed Precoding Approach to Achieve the Degrees of Freedom in X channel}
\author{Soroush Akhlaghi, and Mohammad Ali Maddah-Ali~\IEEEmembership{Member,~IEEE}\\
Emails: akhlaghi@shahed.ac.ir, maddah-a@eecs.berkeley.edu\\
}

\maketitle

\begin{abstract}
This paper aims to provide a fixed precoding scheme to achieve the Degrees of Freedom~\emph{DoF} of the generalized ergodic X channel. This is achieved through using the notion of ergodic interference alignment technique. Accordingly, in the proposed method the transmitters do not require to know the full channel state information, while this assumption is the integral part of existing methods. Instead, a finite-rate feed-back channel is adequate to achieve the~\emph{DoF}. In other words, it is demonstrated that quantized versions of channel gains are adequate to achieve the~DOF. To get an insight regarding the functionality of the proposed method, first we rely on finite field channel models, and then extend the terminology to more realistic cases, including dispersive fading channels in the presence of quantizer. Accordingly, in a Rayliegh fading environment, it is shown a feedback rate of $2\log(p)+\theta(\log\log(p))$ can provide the~\emph{DoF}, where $p$ is the total transmit power. Moreover, the impact of low feedback rate on the multiplexing gain is investigated, where a formula between the achievable multiplexing gain and feedback rate is identified.
\end{abstract}

\begin{keywords}
X channel, ergodic interference alignment, degrees of freedom, finite-rate feedback, feed-forward strategy.
\end{keywords}

\section{Introduction}
%
%
%
%

\PARstart{T}{his} paper concerns communication over X channel with two transmitters and two receivers, in which each transmitter aims at sending an independent message to either of receivers. However, the material in this paper can be extended to a more general case with any arbitrary number of transmitters and receivers. X channel is regarded as a basic block of a multi-port wireless network, since it encompasses a large variety of known channels. For instance, X channel subsumes broadcast, multiple access, and the interference channel. As such, any findings in X channel, by some marginal changes, may be extended to its derivatives.

Recent advances through the surge in the task of finding the capacity of wireless networks have given a new insight into the\emph{ Degrees of Freedom~(DoF)} in such networks~\cite{jafar-x_network}, an idea which is first proposed in MIMO channels~\cite{maddah_signaling_multicast&broadcast}, and then extended to single-antenna X-channel in fast fading environments~\cite{jafar-x_network,jafar-capacity&DoF_x_channel}. This enables to compare the throughput of a multiuser network to that of a single user, as if there is not any interfering co-channel user. Moreover, it provides a quantitative measure which enables to compare the impact of various strategies on the asymptotic throughput of a wireless network at high Signal to Noise Ratio~(SNR) regime. Recently, the~\emph{DoF} of a multiuser X channel when the channel gains vary across time~(or frequency), is obtained in~\cite{jafar-capacity&DoF_x_channel,jafar-Interference_deterministic,Jafar-Interference&DoF_K_intchannel}. This is accomplished through using the notion of interference alignment technique, which basically concerns steering multiple interferes so that the desired signal can be easily distinguished from the unwanted signals at the corresponding receiver. This is in accordance to what is previously applied to the case of multiple antennas in~\cite{maddah_signaling_multicast&broadcast} to spatially align the non-intended signals~(interferers) in the same direction at the receiver part, thereby releasing more dimensions for the intended signal.

The same method can be easily applied for the single antenna X channel by exploiting different time slots/frequency bands, instead of spatial dimensions. For instance, the method proposed in~\cite{Jafar-Interference&DoF_K_intchannel} is proved to be useful in fast fading environment, for which different time slots mimic the required dimensions. Also, in~\cite{Motahari-real_numbers},~\cite{Motahari-real_single_antena} inspired by the notion of Diophantine approximation in number theory, an elegant method, dubbed \emph{Real Interference Alignment}, is deduced and is shown can achieve the same \emph{DoF} over time invariant channels, indicating slow fading channels do not fall short of achieving~\emph{DoF}.

Most of current works assume the CSI is perfectly available at the transmitters. However, this is not a realistic case happening in practical systems. Thus, it is desirable to investigate the achievable~\emph{DoF} in such channel when the partial CSI is causally available through a finite rate feedback channel.

To address the aforementioned issue, inspired by the pioneering works in~\cite{Bobak-Ergodic,Bobak-Interference_finite_SNR}, we propose using a \emph{fixed precoding approach} for the X channel in which the notion of ergodic interference alignment is employed. Moreover, to get more insight regarding the proposed method, first we rely on a finite field model, and then extend the terminology to a more realistic case, where the channel gains are continues random variables drawn from a Gaussian distribution. Accordingly, an elegant feed-forward strategy is deduced, showing the proposed approach still can achieve a DoF in such channel as long as the CSI is partially available at the transmitters. To this end, an elegant vector quantization method is proposed, showing one can approach the DoF as long as the quantization levels fall below a certain threshold.

Moreover, it is shown the feedback rate of $2\log(p)+\Theta(\log\log(p))$ is adequate to approach the~\emph{DoF}, where $p$ denotes the total transmit power. Finally, the impact of quantization errors on the achievable~\emph{DoF} is studied, where a formula between the achievable DoF and the required feedback is identified. In sum, the main contributions of the current work can be summarized in two parts as follows:

\begin{itemize}
  \item The introduction of a \emph{fixed precoding approach} to achieve the \emph{DoF}.
  \item To relates the finite-rate feedback link to the achievable~\emph{DoF} in X channel.
\end{itemize}

The rest of this paper is organized as follows. Next,
Section~\ref{sec:model} provides the system model. Then, Section~\ref{sec:history} presents the related research area and addresses shortcomings.
Section~\ref{sec:motivation_of_proposed} motivates the proposed method and has followed with Section~\ref{sec:finite} that introduce the finite field model. \ref{sec:analysis_of_proposed} formulates the proposed method and \ref{sec:Quantrate} investigate the impact of low quantization rate.
Finally conclusions and future works wrap up the paper.

Throughout the paper, boldface letters indicate vectors (lower case) or matrices (upper case). The $^{\dagger}$ notation denotes the conjugate transpose of a vector or a matrix. In addition $\angle(\bold x,\bold y)$ indicate the angle between two vectors $\bold x$ and $\bold y$ and the cosine of this angle is the inner product of the normalized vectors along the vectors $\bold x$ and $\bold y$. The null space of a vector $\bold{v}$ is denoted by N($\bold{v}$).

\section{System Model}\label{sec:model}
We consider a simple X channel composed of two transmitters and two receivers in which each transmitter aims at sending independent messages to either of receivers
(see Figure~\ref{fig:1}). Assuming the channel between the $i^{th}$ transmitter and the $j^{th}$ receiver at time instant $t$\footnote{Note that $t$ can be replaced by specific dimension, i.e., temporal, frequency or spatial dimensions.} is represented by $h_{ji}(t)$, the received signal by the $j^{th}$ receiver, $y_j(t)$ for $j=1,2$, is given by,
\begin{eqnarray}\label{equ1}
y_{1}(t)&=&h_{11}(t)x_1(t)+h_{12}(t)x_2(t)+n_1(t)\nonumber\\
y_{2}(t)&=&h_{21}(t)x_1(t)+h_{22}(t)x_2(t)+n_2(t)~.
\end{eqnarray}
In some cases, it is desirable to consider $M$ different time slots which are perceived to be complementary matched (the notion of complementary matched time slots will be discussed in Section~\ref{sec:analysis_of_proposed}). In what follows, the set of indexes of complementary matched time slots is called complementary set. Hence, one can readily rewrite~(\ref{equ1})~ for the $\tau^{th}$ complementary set, encompassing $M$ complementary time slots, using the following matrix notation,
\begin{eqnarray}\label{equ2}
\bold{y}_{1}(\tau)&=&\bold{H}_{11}(\tau)\bold{x}_1(\tau)+\bold{H}_{12}(\tau)\bold{x}_2(\tau)+\bold{n}_1(\tau)\nonumber\\
\bold{y}_{2}(\tau)&=&\bold{H}_{21}(\tau)\bold{x}_1(\tau)+\bold{H}_{22}(\tau)\bold{x}_2(\tau)+\bold{n}_2(\tau)~.
\end{eqnarray}
Assuming $\{{\pi_1},{\pi_2},\ldots,{\pi_M}\}$ represents the corresponding channel usage index for the $\tau^{th}$ complementary set, it follows
\begin{eqnarray}\label{equ3}
\bold{x}_{i}(\tau)\!\!\!\!&=&\!\!\!\![x_i(\pi_1),x_i(\pi_2),\ldots,x_i(\pi_M)]^T~\textrm{for}\;i=1,2\nonumber\\\nonumber\\
\bold{H}_{ij}(\tau)\!\!\!\!&=&\!\!\!\!
\left(
\begin{array}{cccc}
h_{ij}(\pi_1) & 0 & \ldots & 0 \\
0 & h_{ij}(\pi_2) & \ldots & 0 \\
\vdots & \vdots & \ddots & \vdots \\
0 & 0 & \ldots & h_{ij}(\pi_n)
\end{array}
\right)\nonumber\\
&~&\quad\quad\quad\quad\quad\quad\quad\quad\textrm{for}\;i,j=1,2
\end{eqnarray}
Also, it is assumed the $i^{th}$ transmitter sends a data stream $d_{ji}$ to the $j^{th}$ receiver, for $i,j=1,2$. To this end, each transmitter may send either of data streams along \emph{distinct} directions, dubbed beamforming directions, i.e.,
\begin{eqnarray}\label{equ4}
\bold{x}_{1}(\tau)\!\!\!\!&=&\!\!\!\!d_{11}\bold{v}_{11}(\tau)+d_{21}\bold{v}_{21}(\tau)\nonumber\\
\bold{x}_{2}(\tau)\!\!\!\!&=&\!\!\!\!d_{12}\bold{v}_{12}(\tau)+d_{22}\bold{v}_{22}(\tau)
\end{eqnarray}
where in~(\ref{equ4}), $\bold{v}_{ji}$ for $i,j=1,2$, denotes the assigned direction corresponding to the data stream $d_{ji}$. Thus, the received signal at the $j^{th}$ receiver, for $j=1,2$, would be,
\begin{eqnarray}\label{equ5}
\bold{y}_{1}(\tau)\!\!\!\!\!\!&=&\!\!\!\!\!\!d_{11}\bold{H}_{11}(\tau)\bold{v}_{11}(\tau)+d_{12}\bold{H}_{12}(\tau)\bold{v}_{12}(\tau)+\bold{i}_1(\tau)+\bold{n}_1(\tau)\nonumber\\
\bold{y}_{2}(\tau)\!\!\!\!\!\!&=&\!\!\!\!\!\!d_{21}\bold{H}_{21}(\tau)\bold{v}_{21}(\tau)+d_{22}\bold{H}_{22}(\tau)\bold{v}_{22}(\tau)+\bold{i}_2(\tau)+\bold{n}_2(\tau)\nonumber\\
\end{eqnarray}
where $\bold{i}_j$ for $j=1,2$ are defined as,
\begin{eqnarray}\label{equ6}
\bold{i}_{1}(\tau)\!\!\!\!&=&\!\!\!\!d_{21}\bold{H}_{11}(\tau)\bold{v}_{21}(\tau)+d_{22}\bold{H}_{12}(\tau)\bold{v}_{22}(\tau)\nonumber\\
\bold{i}_{2}(\tau)\!\!\!\!&=&\!\!\!\!d_{11}\bold{H}_{21}(\tau)\bold{v}_{11}(\tau)+d_{12}\bold{H}_{22}(\tau)\bold{v}_{12}(\tau)~.
\end{eqnarray}
It is worth mentioning that the terms $\bold{i}_j$ for $j=1,2$ can be treated as interference terms, as they entail
non-intended signals for the corresponding receiver.
\begin{figure}[htp]
\centering
\epsfig{file=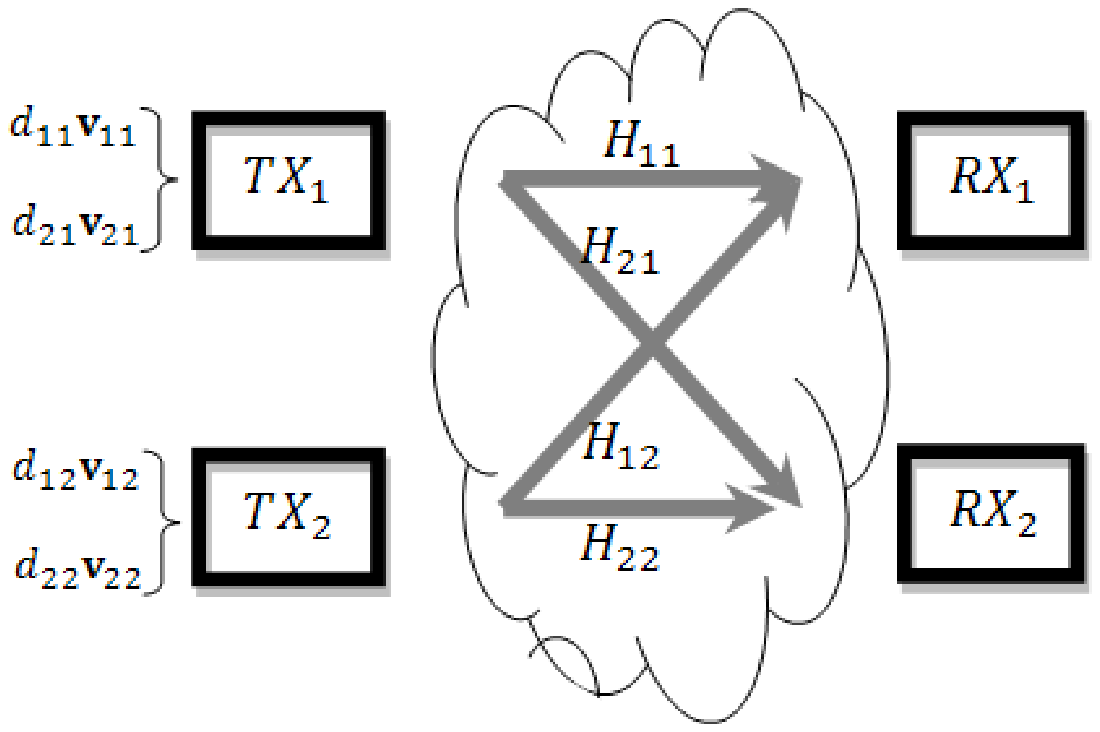,width=0.65\linewidth,clip=1}
\caption{The block diagram of MIMO X channel.}
\label{fig:1}
\end{figure} 

\section{Background Information}\label{sec:history}
This section aims to address recent advances through exploring the ~\emph{DoF} of X channels. It should be noted that the ~\emph{DoF} in a network with finite number of users is defined as the ratio of sum-capacity,~$C(\rho)$, over the signal to noise ratio, $\rho$, in log scale as the signal to noise ratio tends to infinity~\cite{Poon-DoF,Host-MG_network,Jafar-DoF_mimo_intchannel}, i.e.,
\begin{eqnarray}\label{equ7}
DoF=\lim_{\rho \to \infty} \frac{C(\rho)}{\log(\rho)}
\end{eqnarray}
The ~\emph{DoF} of X channel is first explored in~\cite{maddah_signaling_multicast&broadcast} for a $2\times 2$ MIMO X channel and then extended in~\cite{jafar-x_channel} to more general cases for an arbitrary number of transmitters/receivers and any number of antennas. In all of the aforementioned research works, the notion of interference alignment technique is employed, which basically concerns to steer the interfering signals in a small sub-space, thereby releasing more dimensions for the desired signals. As a result, assuming individual data streams are sent over $M$ time slots, the resulting ~\emph{DoF} is simply computed as the total number of interference-free dimensions over $M$~(all used dimensions). For instance, referring to~(\ref{equ6}), the interference term at the first receiver, $i_1(\tau)$, is composed of two interfering vectors: $d_{21}\bold{H}_{11}(\tau)\bold{v}_{21}(\tau)$ and $d_{22}\bold{H}_{12}(\tau)\bold{v}_{22}(\tau)$. Thus, it is desirable to devise beamforming vectors $\bold{v}_{21}$ and $\bold{v}_{22}$ such that the interfering vectors occupy the same direction at the receiver $1$. The same argument can be readily applied to the receiver $2$. As a result, it is demonstrated that the beamforming vectors $\bold{v}_{ij}$ for $i,j=1,2$ can be chosen such that the interfering vectors at each receiver lie in the same direction, i.e.,
\begin{eqnarray}\label{equ8}
\bold{H}_{11}(\tau)\bold{v}_{21}(\tau)&=&c_1\bold{H}_{12}(\tau)\bold{v}_{22}(\tau)\nonumber\\
\bold{H}_{21}(\tau)\bold{v}_{11}(\tau)&=&c_2\bold{H}_{22}(\tau)\bold{v}_{12}(\tau)
\end{eqnarray}
where $c_i$ for $i=1,2$, denote any arbitrary constant values. As a result, one direction at each receiver is reserved for the interference signals, and hence, this dimension can be readily eliminated by zero-forcing processing at each receiver and noting $M$ time slots are being used for transmission, hence, $M-1$ out of $M$ dimensions are retained~(the null-space of the interference signal) for the data streams to be sent to either of receivers. Since, either of receivers should receive totaly two independent data streams, each from either of transmitters, thus the null-space should be of rank $2$, or greater, i.e., $M-1\ge 2$. As a result, $M=3$ is the minimum number of dimensions for sending totaly $4$ non-interfering data streams (two data streams for each receiver), and hence, the resulting ~\emph{DoF} would be $\frac{4}{3}$~\cite{maddah_signaling_multicast&broadcast}. However, the interference alignment approach posses some impractical restrictions, i.e., it is assumed the CSI is perfectly available at the transmitters. This shortcoming is the primary source of motivations behind the current work. In the next section, motivated by the pioneering work in~\cite{Bobak-Ergodic,Bobak-Interference_finite_SNR}, a more realistic approach is proposed which addresses the aforementioned issue.

\section{Proposed method}\label{sec:motivation_of_proposed}
The interference alignment technique, as is noted earlier,
emerged out of the work on exploring the \emph{DoF} of MIMO
X channel~\cite{jafar-x_channel}, and then is identified as a promising approach
to discover the asymptotic capacity of a wide variety of wireless networks at high
SNR region, including the broadcast and the interference channels~\cite{maddah_signaling_multicast&broadcast,Jafar-Interference_K_intchannel}.
In two-user X channel, referring to the argument discussed in the preceding
section, the conventional interference alignment technique attempts to find
transmit beamforming vectors such that the resulting interfering signals at
each receiver occupies the same direction. Accordingly, in~\cite{Jafar-DoF_MIMO_X_channel} an
elegant approach is proposed to compute the beamforming vectors based on
the causal CSI at the transmitters. However, a forward link is required to
send the beamforming vectors to the receivers. Moreover, the transmitters need
to know the perfect CSI to determine the beamforming vectors.
However, for a broad variety of ergodic channels, another variation of
interference alignment is recently proposed, which is called the ergodic
interference alignment~\cite{Bobak-Ergodic}, an idea which is proved to draw a concrete path towards exploring the \emph{DoF} of more sophisticated networks. For instance, through using the notion of ergodic interference alignment, it is shown in a K-user interference
network, each user can achieve half of its interference-free ergodic
capacity~\cite{Bobak-Interference_finite_SNR}.
This is achieved through finding a set of channel indexes which form a
complementary set according to some criteria, and sending the same data
stream over these dimensions, so that the resulting channels between each
transmitter to the affiliated receiver seems as if the there is no
interference.
In this work, we generalize the concept of ergodic interference alignment
to X channel, and show it leads to a fixed-precoding approach for the
entire transmission.
In what follows, we first concentrate on the finite-field model for the channel gains to get an intuition how our proposed approach works and then extent to more general cases, including the fading channel.

\section{Ergodic Interference alignment for Finite Field two-user X Channel}\label{sec:finite}
We consider $\mathcal{A}$ as a set of $|\mathcal{A}|$ distinct elements, i.e., $\mathcal{A}=\{a_1,a_2,\ldots,a_{|\mathcal{A}|}\}$. We also assume $(\mathcal{A},+,\odot)$ is a Field, where $+$ and $\odot$ denote, respectively, the addition and multiplication signs. As a result, $(\mathcal{A},+)$ and $(\mathcal{A}-\{0\},\odot)$ form commutative groups, where $0$ is the identity element of addition sign. We also define a vector space $\mathcal{V}$ of dimension three, whose elements are three-tuple vectors with entries from the set $\mathcal{A}$. Also, the element-wise addition and scalar multiplications over the set $\mathcal{V}$ are assumed to be, respectively, in accordance to the addition and multiplication signs of the field $(\mathcal{A},+,\odot)$.

It can be shown that the vector space defined over $\mathcal{A}$ is a commutative group, thus associative and commutative laws are verified. The scalar product between a vector in the vectorial space, $\bold v \in \mathcal{V}$, and a scalar, $c\in \mathcal{A}$, dubbed scaler product, is defined as
\begin{displaymath}
c \bullet  \bold v = \left(\begin{array}{c}
                 c\odot v_1 \\
                 c\odot v_2 \\
                 c\odot v_3
               \end{array}\right)
\end{displaymath}
Additionally, two vectors $\bold{v}_1, \bold{v}_2\in \mathcal{V}$ have the same direction if there is a non-zero scalar $c\in \mathcal{A}$ for which the following equality holds,
\begin{displaymath}
\bold{v}_1 = c \bullet \bold{v}_2
\end{displaymath}

In what follows, we present the concept of ergodic interference alignment for finite field two-user X channel to get an indication regarding the proposed method, assuming the channel gains are uniformly chosen from the elements of $\mathcal{A}$ and independently vary across time. Moreover, it is simply assumed the channel is noise free as the objective is to see how the interference term is managed~(see Figure~\ref{fig:2} for the system model). As is mentioned in section~\ref{sec:model}, there are four data streams $d_{ij}$ for $i,j=1,2$ to be sent to the intended receivers. In this work, we simply assume precoding vectors are any arbitrary disjoint vectors chosen from vector space $\mathcal{V}$ defined over the set $\mathcal{A}$. Also, it is assumed the channel gains are randomly chosen from the entries of set $\mathcal{A}$. Moreover, we simply discard those time indexes for which the channel gain between a transmitter to a receiver is zero. Clearly, this has a modest impact on the \emph{DoF} when the cardinality of the set $\mathcal{A}$ is large enough. In what follows, we will describe a method which aims at identifying the complementary matched time slots. As is mentioned in the preceding section, we are going to classify time slots in triple sets in which (\ref{equ8}) holds for each set. To this end, the proposed algorithm can be summarized in the following steps,
\begin{figure*}[htp]
\centering
\epsfig{file=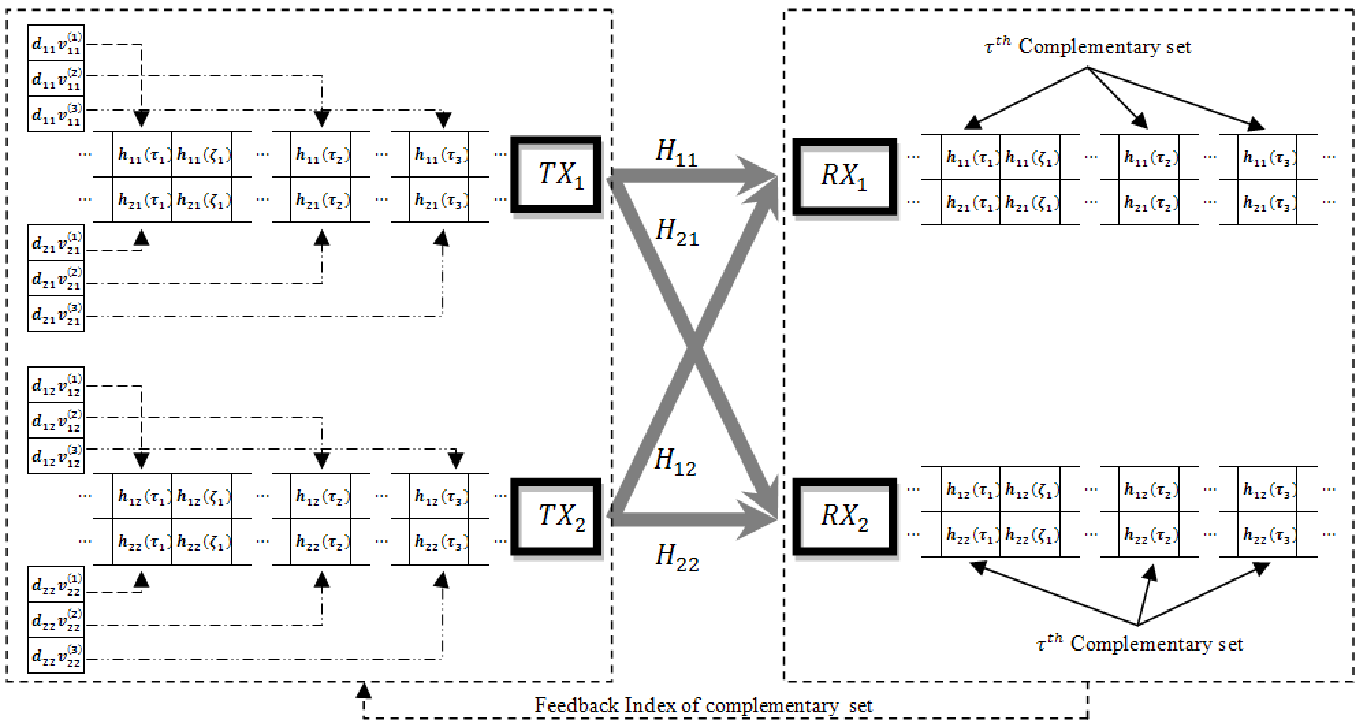,width=1\linewidth,clip=}
\caption{The block diagram of the proposed method for two-user X channel.}
\label{fig:2}
\end{figure*}
\begin{itemize}
  \item Step 1: Set $t=1$, $\tau=1$, and consider the $t^{th}$ time slot in the $\tau^{th}$ complementary set, $\Pi_\tau$, thus $1\in \Pi_1$.
  \item Step 2: Compute the constant terms, $c_i$ for $i=1,2$, defined in (\ref{equ8}) associated with the $\tau^{th}$ complementary set as $c_1=\frac{h_{11}(t)\odot \bold{v}_{21}^t}{h_{12}(t)\odot \bold{v}_{22}^t}\in \mathcal{A}$ and $c_2=\frac{h_{21}(t)\odot \bold{v}_{11}^t}{h_{22}(t)\odot \bold{v}_{12}^t}\in \mathcal{A}$, where $t$ is the first selected time slot in the corresponding complementary set, thus $c_1=\frac{h_{11}(1)\odot \bold{v}_{21}^1}{h_{12}(1)\odot \bold{v}_{22}^1}$ and $c_2=\frac{h_{21}(1)\odot \bold{v}_{11}^1}{h_{22}(1)\odot \bold{v}_{12}^1}$ for $\Pi_1$.
  \item Step 3: Set t=t+1, compute $c_1=\frac{h_{11}(t)\odot \bold{v}_{21}^t}{h_{12}(t)\odot \bold{v}_{22}^t}$ and $c_2=\frac{h_{21}(t)\odot \bold{v}_{11}^t}{h_{22}(t)\odot \bold{v}_{12}^t}$. Then verify if these values are equal to $c_1$ and $c_2$ associated with existing complementary sets, if there is any set with the same $c_1$ and $c_2$, put this time slot into this complementary set; otherwise set $\tau=\tau+1$ and initiate $\Pi_\tau$ as the $\tau$'th complementary set, then put time slot $t$ into $\Pi_\tau$ and assign the computed $c_1$ and $c_2$ to this set.
  \item Step 4: Check whether there is a complementary set among existing initiated sets of size $M=3$, if so, declare this set as a \emph{complementary matched set}, and proceeds the algorithm for further investigation to complete/initiate other \emph{complementary matched sets}.
  \item Step 5: Go to Step 3.
  \end{itemize}

Thus, having aware of complementary matched sets, the proposed interference alignment algorithm can be summarized in the following,

\begin{itemize}
  \item We assume $\bold{v}_{ij}$ for $i,j=1,2$ are randomly chosen beamforming vectors with entries from the set $\mathcal{A}$, and are assumed to be fixed for all transmissions. Moreover, it is assumed $D_{\Pi_\tau}=\{d_{11},d_{12},d_{21},d_{22}\}$ is the set of information signals to be sent over the $\tau$'th complementary matched set (see equation (\ref{equ4})).
  \item At each time instant, find this time instant belongs to which complementary set, i.e., $t\in\Pi_{\tau}$. Moreover find this time instant belongs to which position in this set, i.e., the $l$'th position where $l$ is an integer value between 1 and $M$ (here $M=3$). Then, referring to (\ref{equ4}), send $d_{1i}\bold{v}^{(l)}_{1i}+d_{2i}\bold{v}^{(l)}_{2i}$ from the $i$'th transmit antenna, where $d_{ij}\in D_{\Pi_{\tau}}$ and $\bold{v}^{(l)}$ are scalar values and $\bold{v}^{(l)}$ is denoting the $l$'th element of vector $\bold{v}$.
  \item At each receiver, form the received signals as M-tuples (here $M=3$) according to the complementary sets to make a received signal vector.
  \item As the received interference signals arising at the receivers are aligned, thus, one can simply project the received signal vector at each receiver to the null-space of the aligned interference vector and thus the desired signals \big($\{d_{11},d_{12}\}$ for receiver 1 and \{$d_{21},d_{22}\}$ for receiver 2\big) can be simply decoded, as if there is no interference.
  \end{itemize}
Note that at each time instant, the values $c_1=\frac{h_{11}(t)\odot \bold{v}_{21}^t}{h_{12}(t)\odot \bold{v}_{22}^t}$ and $c_2=\frac{h_{21}(t)\odot \bold{v}_{11}^t}{h_{22}(t)\odot \bold{v}_{12}^t}$ belong to the set $\mathcal{A}$. Assuming $h_{ij}$ for $i,j=1,2$ and $\bold{v}_{ij}^t$ for $i,j=1,2$ are uniformly distributed over the non-zero elements of the set $\mathcal{A}$, thus $c_1$ and $c_2$ take uniformly a non-zero element of the set $\mathcal{A}$. As a result, assuming the set $\mathcal{A}$ is of size $|\mathcal{A}|$, thus for a non-zero element of the set $\mathcal{A}$, i.e., $a_i\in\mathcal{A}$, it follows $\textrm{Pr}_{c_1}(x=a_i)=\textrm{Pr}_{c_1}(x=a_i)=\frac{1}{|\mathcal{A}|-1}$. Thus the probability that $M=3$ randomly chosen time slots have the same $c_1$ becomes $\textrm{p}_1=(|\mathcal{A}|-1)\frac{1}{(|\mathcal{A}|-1)^3}=\frac{1}{(|\mathcal{A}|-1)^2}$. Similarly, with probability $\textrm{p}_2=\frac{1}{(|\mathcal{A}|-1)^2}$ these time slots have the same $c_2$. Finally, the probability that these time slots are complementary matched becomes $\textrm{p}_1\textrm{p}_2=\frac{1}{(|\mathcal{A}|-1)^4}$. Finally, the probability that one can find $M=3$ out of $n$ time slots which are proportionally matched becomes $\left(\begin{array}{c}
         n \\
         3
       \end{array}
\right)\textrm{p}_1\textrm{p}_2\approx \frac{n^3}{(|\mathcal{A}|-1)^4}$. After some manipulations and for large values of $|\mathcal{A}|$, in order to have this probability approaching one, one can verify that $n$ scales as $\theta(|\mathcal{A}|^{\frac{4}{3}})$. This intuitively reflects the delay which is imposed to the system by the use of the proposed method.

\section{Analysis of Proposed Method}\label{sec:analysis_of_proposed}
In the preceding section, to get an insight regarding the proposed approach, the main idea is thoroughly discussed for the finite field model. In this section, we aim to extend this terminology to a more realistic case. To this end, it is assumed there is a common finite rate feedback channel to the transmitters, which merely provides some information regarding the complementary set. More precisely, the feedback channel defines the current time slot belongs to which complementary set. Fig.\ref{fig:2} illustrates the block diagram of the proposed method. As a result, each transmitter sends the same information over time slots of the same complementary set to virtually make a MIMO channel. Moreover, it is assumed the beamforming vectors, $\bold{v}_{ij}$ for $i,j=1,2$, are randomly chosen vectors. In what follows, we will introduce the proposed algorithm which aims at finding the complementary set.s

It is assumed the CSI is perfectly available at the receivers, and each receiver makes use of Random Vector Quantization (RVQ)~\cite{Santipach-limited_feedback,Santipach}. Moreover, the random vector quantizer $\mathcal{C}$ contains $2^{B}$ isotropic random vectors, which are uniformly distributed on the M-dimensional unit sphere. Accordingly, for a given M-dimensional vector, $\bold{q}$, the $i^{th}$ vector in the set $\mathcal{C}$ is chosen as the corresponding quantization vector, provided that the following condition holds
{\setlength\arraycolsep{1pt}
\begin{eqnarray}\label{equ9}
i&=&~\textrm{arg}~\max_{j=1,\ldots,2^{B}} |\tilde{\bold{q}}^{\dagger}\bold{w}_{j}|^2\nonumber\\
~&=&~\textrm{arg}~\max_{j=1,\ldots,2^{B}} \cos^{2}(\angle(\tilde{\bold{q}},\bold{w}_{j}))\nonumber\\
~&=&~\textrm{arg}~\min_{j=1,\ldots,2^{B}} \sin^{2}(\angle(\tilde{\bold{q}},\bold{w}_{j})),
\end{eqnarray}}
where in (\ref{equ9}), it is assumed $\tilde{\bold{q}}=\frac{\bold{q}}{|\bold{q}|}$, meaning the direction of $\bold{q}$ is merely quantized. Moreover, $\bold{w}_j$ denotes the $j^{th}$ vector in the set $\mathcal{C}$. Now, we are ready to describe the proposed algorithm. Referring to (\ref{equ8}), one can select the set of time indexes in which the selected random vectors (out of existing $2^B$ vectors) corresponding to the received vectors $\bold{H}_{11}\bold{v}_{21}$ and $\bold{H}_{21}\bold{v}_{11}$ are, respectively, the same as that of the vectors $\bold{H}_{12}\bold{v}_{22}$ and $\bold{H}_{22}\bold{v}_{12}$. In other words, the complementary matched time slots are selected such that the quantized direction of vector $\bold{H}_{11}\bold{v}_{21}$ is the same as that of $\bold{H}_{12}\bold{v}_{22}$, and this concurrently happens for the vectors $\bold{H}_{21}\bold{v}_{11}$ and $\bold{H}_{22}\bold{v}_{12}$. Once these time slots are identified, one can readily inform transmitters through the common finite-rate feedback channel to send the same information over these time slots. From now on, we call the aforementioned procedure as the proposed alignment method.

It is worth mentioning that for randomly chosen $M$ time slots, the probability that the resulting quantization vectors corresponding to the received vectors $\bold{H}_{11}\bold{v}_{21}$ and $\bold{H}_{12}\bold{v}_{22}$ to be the same is $2^{-B}$. Noting with the same probability these time slots are complementary matched for the second receiver, thus the probability that one can find a set of M time slots out of existing n time slots for which the interfering vectors at both receivers are aligned is $\left(\begin{array}{c}
         n \\
         M
       \end{array}
\right)2^{-B}2^{-B}\approx \frac{n^M2^{-2B}}{M!}$ for large $n$. Finally, the maximum value of $B$ for which this probability approaches one, scales as $B_{max}=\frac{M}{2}\log(n)-\Theta(\log\log(n))$. This means, as long as $B$ falls below $B_{max}$, with probability approaching one, there exist $M$ out of $n$ time slots for which $\bold{H}_{11}\bold{v}_{21}$ and $\bold{H}_{21}\bold{v}_{11}$ are steered along $\bold{H}_{12}\bold{v}_{22}$ and $\bold{H}_{22}\bold{v}_{12}$, respectively. It should be noted that RVQ may not be the optimum quantization method, however, it is demonstrated that this approach asymptotically approaches the optimum quantization method for the large value of $B$ \cite{Jindal-finite_feedback}. Moreover, as is shown later RVQ lends itself to a simplified mathematical formulation for the underlaying problem.

For the reminder of this section, we turn our attention to the impact of $B$ on the resulting \emph{DoF} in X channel. Assuming the $\tau^{th}$ complementary set is identified based on the proposed algorithm defined earlier, for the reminder of this section we simply discard the complementary set number $\tau$ to simplify the notations. Moreover, due to the symmetrical properties, we will restrict our attention to the receiver one. As a result, referring to~(\ref{equ5}) and noting above, the received signal vector for the $\tau^{th}$ complementary set of $M$ time slots at the first receiver can be rewritten as
\begin{eqnarray}\label{equ10}
\bold{y}_{1}=d_{11}\bold{q}_{11}+d_{12}\bold{q}_{12}+\bold{i}_1+\bold{n}_1
\end{eqnarray}
where in (\ref{equ10}), it is assumed $\bold{q}_{11}=\bold{H}_{11}\bold{v}_{11}$ and $\bold{q}_{12}= \bold{H}_{12}\bold{v}_{12}$. Also, the interfering vector $\bold{i}_1$ is
\begin{eqnarray}\label{equ11}
\bold{i}_{1}=d_{21}\bold{q}_{21}+d_{22}\bold{q}_{22}~.
\end{eqnarray}
Again, it is assumed $\bold{q}_{21}=\bold{H}_{11}\bold{v}_{21}$ and $\bold{q}_{22}=\bold{H}_{12}\bold{v}_{22}$.
Recall that, referring to the proposed alignment method, we have
\begin{eqnarray}\label{equ12}
i=\textrm{arg} \max_{j=1,\ldots,2^{B}} |\tilde{\bold{q}}^{\dagger}_{21}\bold{w}_{j}|^2=
\textrm{arg} \max_{j=1,\ldots,2^{B}} |\tilde{\bold{q}}^{\dagger}_{22}\bold{w}_{j}|^2
\end{eqnarray}
Assuming $\hat{\bold{q}}_{ij}$ for $i,j=1,2$ denote the quantization vectors corresponding to $\tilde{\bold{q}}_{ij}$ for $i,j=1,2$, thus referring to (\ref{equ12}), it follows $\hat{\bold{q}}_{21}=\hat{\bold{q}}_{22}=\bold{w}_{i}$ ( note that $\bold{w}_j$ for $j=1,\ldots,2^B$ are of unit norm). Also, we define\footnote{One could also replace $\tilde{\bold{q}}_{21}$ with $\tilde{\bold{q}}_{22}$ in the definition of $z_j$, as we are just interested in finding the distribution of $z_{max}$.} $z_j=|\tilde{\bold{q}}^{\dagger}_{21}\bold{w}_{j}|^2$ for $j=1,\ldots,2^B$. Thus, referring to equation (\ref{equ12}), we have $z_i=z_{max}$. Equation (\ref{equ12}) states that $z_i$ is the maximum of $2^{B}$ random variables, i.e., $z_i= \max_{j=1,\ldots,2^B} z_j$, where we have
\begin{eqnarray}\label{equ13}
z_{j}&=&|\tilde{\bold{q}}^{\dagger}_{21}\bold{w}_{j}|^{2}\nonumber\\
&=&\cos^{2}\left(\angle(\tilde{\bold{q}}_{21},\bold{w}_{j})\right)\nonumber\\
&=&1-\sin^{2}\left(\angle(\tilde{\bold{q}}_{21},\bold{w}_{j})\right),
\end{eqnarray}
As $\tilde{\bold{q}}_{21}$ and $\bold{w}_j$ are i.i.d. isotropic vectors with unit norm, thus $z_j$ is beta distributed with parameters $1$, $M-1$, i.e., $z_j\sim\beta(1,M-1)$, thus it has the following cumulative distribution function (cdf) ~\cite{Au-Yeung,Love-Grassmannian,Mukkavilli-beamforming_finite_feedback}:
\begin{eqnarray}\label{equ14}
F_{Z_{j}}(z_{j})=1-(1-z_{j})^{M-1},
\end{eqnarray}
Where $M$ is the dimension of vectors $\tilde{\bold{q}}_{12}$ and $\bold{w}_j$. As a result, the distribution of the maximum ($z_{max}$) of $2^B$ random variables, i.e., $z_j$ for $j=1,\ldots,2^B$ each drawn from a beta distribution with parameters $1$ and $M-1$, can be computed as,
\begin{eqnarray}\label{equ15}
F_{Z_{max}}(z_{max})&=&Pr(Z_{max}\leq z_{max})\nonumber\\
&=&Pr(Z_1\leq z_{max},\ldots,Z_{2^B}\leq z_{max})\nonumber\\
&=&Pr(Z_1\leq z_{max})^{2^B}\nonumber\\
&=&(1-(1-z_{max})^{M-1})^{2^{B}}~.
\end{eqnarray}
Note that we are interested in the distribution of $x\triangleq\sin^{2}(\angle(\tilde{\bold{q}}_{21},\hat{\bold{q}}_{21})=1-z_{max}$, which can be readily computed as,
\begin{eqnarray}\label{equ16}
\overline{F}_{X}(x)=(1-x^{M-1})^{2^{B}}~,
\end{eqnarray}
where $\overline{F}$ states the complementary of F. In the following, it is argued that the expectation of $x$ is strictly upper bounded as $2^{-\frac{B}{M-1}}$.

\begin{lemma}\label{lemma:upper-bounded}
The expectation of quantization error for RVQ technique can be upper bounded as,
\begin{displaymath}
E[\sin^{2}(\angle(\tilde{\bold{q}}_{kl},\hat{\bold{q}}_{kl})]~<~2^{-\frac{B}{M-1}}.
\end{displaymath}
\end{lemma}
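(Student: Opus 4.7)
The plan is to compute the expectation directly from the complementary cdf in (\ref{equ16}) using the identity $E[X]=\int_0^1 \overline{F}_X(x)\,dx$ valid for nonnegative bounded random variables, and then bound the resulting integral. So I would start from
\[
E\!\left[\sin^2(\angle(\tilde{\bold q}_{kl},\hat{\bold q}_{kl}))\right] \;=\; \int_0^1 (1-x^{M-1})^{2^B}\,dx .
\]
Since the integrand is supported on $[0,1]$ and the parameter $2^B$ is a large integer, the natural next step is to apply the elementary inequality $1-y\le e^{-y}$ on $y=x^{M-1}\in[0,1]$, which yields $(1-x^{M-1})^{2^B}\le e^{-2^B x^{M-1}}$ with strict inequality on the open interval $(0,1)$.

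The second step is a change of variables tailored to turn the exponential integral into a Gamma function. Setting $t=2^B x^{M-1}$, extending the range of integration from $[0,1]$ to $[0,\infty)$, and rescaling gives
\[
\int_0^1 e^{-2^B x^{M-1}}\,dx \;\le\; \frac{2^{-B/(M-1)}}{M-1}\,\Gamma\!\left(\tfrac{1}{M-1}\right) \;=\; 2^{-B/(M-1)}\,\Gamma\!\left(1+\tfrac{1}{M-1}\right),
\]
using the recursion $\Gamma(z+1)=z\,\Gamma(z)$. This isolates the desired factor $2^{-B/(M-1)}$ cleanly.

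The third step is to observe that $\Gamma(1+\tfrac{1}{M-1})\le 1$ for every integer $M\ge 2$. This follows because $\Gamma$ on the interval $[1,2]$ takes value $1$ at both endpoints and is strictly less than $1$ in between (its minimum is attained near $1.4616$), and the argument $1+\tfrac{1}{M-1}$ lies in $(1,2]$. Chaining the two bounds gives the claim $E[X]<2^{-B/(M-1)}$; strictness of the final inequality is inherited from the strict inequality $1-y<e^{-y}$ on $(0,1)$ (for the $M=2$ case, where $\Gamma(2)=1$ exactly, this is what rules out equality, and for $M\ge 3$ the Gamma factor already gives strictness).

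The main obstacle, if any, is bookkeeping around the boundary case $M=2$: the Gamma factor is exactly $1$ there, so the proof must extract the strictness from the pointwise gap between $(1-y)^n$ and $e^{-ny}$ rather than from the Gamma estimate. Other than that, each step is a direct computation; the only modeling input that must be used is the closed form (\ref{equ16}) for $\overline{F}_X$, which in turn relies on the beta-$(1,M-1)$ distribution of $|\tilde{\bold q}^\dagger \bold w_j|^2$ already recalled in the text.
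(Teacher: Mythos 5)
Your proof is correct, and it reaches the stated bound by a genuinely different route in the estimation step. You and the paper share the same starting point, namely the tail-integral identity $E[X]=\int_{0}^{1}(1-x^{M-1})^{2^{B}}\,dx$ built on the closed form of $\overline{F}_X$ in (\ref{equ16}). From there the paper evaluates the integral exactly as a Beta function, rewrites it as a ratio of Gamma functions, and then invokes Kershaw's inequality $\frac{\Gamma(x+s)}{\Gamma(x+1)}<\bigl(x+\frac{s}{2}\bigr)^{s-1}$ (together with $\Gamma(1+\tfrac{1}{M-1})\le 1$) to bound that ratio by $\bigl(2^{B}+\tfrac{M}{2(M-1)}\bigr)^{-\frac{1}{M-1}}<2^{-\frac{B}{M-1}}$. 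You instead bound the integrand pointwise by $e^{-2^{B}x^{M-1}}$ via $1-y\le e^{-y}$, extend the integration range to $[0,\infty)$, and read off $2^{-B/(M-1)}\,\Gamma(1+\tfrac{1}{M-1})\le 2^{-B/(M-1)}$ from the Gamma integral. Your version is more elementary --- it needs only the inequality $1-y\le e^{-y}$ and the standard Gamma integral, and it dispenses with Kershaw's inequality entirely --- and your attention to where strictness comes from in the boundary case $M=2$ (from the pointwise gap $1-y<e^{-y}$ on $(0,1)$, since $\Gamma(2)=1$ exactly) is a detail the paper glosses over. What the paper's route buys is an exact intermediate expression for the expectation before any bounding; what yours buys is brevity, self-containedness, and in fact an asymptotically tighter constant, since the factor $\Gamma(1+\tfrac{1}{M-1})<1$ for $M\ge 3$ is retained in your intermediate bound whereas Kershaw's estimate discards it. Both arguments are valid for all $M\ge 2$ and all $B\ge 0$, which covers the case $M=3$ used in the rest of the paper.
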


\begin{proof}
for a complete proof refer to~\cite{Jindal-finite_feedback}.
\end{proof}
Referring to (\ref{equ10}) and (\ref{equ11}), and noting $d_{11}$ and $d_{12}$ are intended signals for the receiver one, the problem is to deduce a receive beamforming strategy to effectively remove the interfering signals, as if there is no interference. Assuming the interfering vectors in (\ref{equ11}), i.e., $d_{21}\bold{q}_{21}$ and $d_{22}\bold{q}_{22}$, are perfectly aligned, one could readily project the received signal in the null space of the aforementioned vectors, thereby canceling out the interference term in the null space. Moreover, in order to decode $d_{11}$ and $d_{12}$ in this null space, they should occupy different directions, thus the rank of null space should be at least two. As a result, we need at least three dimensions; one for the aligned interfering signals, and two directions for either of information signals $d_{11}$ and $d_{12}$. Consequently, we need to have $M=3$ and noting totaly $4$ information signals are sent to the respected receivers ($d_{ij}$ for $i,j=1,2$), thus the DOF becomes $\frac{4}{3}$.

However, this is not the case happening when using RVQ technique, as the interfering vectors are not perfectly aligned. In order to address the aforementioned issue, as the intended signals to receiver one are $d_{11}$ and $d_{12}$, we simply restrict our attention to decode $d_{11}$, as any findings, due to existing symmetrical properties for $d_{11}$ and $d_{12}$, can be readily extended to $d_{12}$. In what follows, the aforementioned issue is thoroughly discussed in more details.

In order to alleviate the impact of interfering signal vectors, one should project the received signal in the null space of the direction which is perceived to reduce the interference power (note that according to the proposed alignment strategy the interfering signals are not perfectly aligned along a single direction). In this work, referring to (\ref{equ12}), we set this direction along $\hat{\bold{q}}_{21}$ (or $\hat{\bold{q}}_{22}$) which is thought to be the best direction out of existing $2^{B}$ directions from the quantization vectors in the set $\mathcal{C}$. To this end, one need to multiply the received signal by $\bold{\Phi}_{21}$ (or equivalently $\bold{\Phi}_{22}$) which makes the projection space, thereby decreasing the interference power. Thus, we have,
\begin{eqnarray}\label{equ18}
\bold{\Phi}_{21}~=~\bold{I}-\hat{\bold{q}}_{21}\hat{\bold{q}}^{\dagger}_{21}.
\end{eqnarray}
As a result, applying $\bold{\Phi}_{21}$ to (\ref{equ10}) and noting (\ref{equ11}), it follows
{\setlength\arraycolsep{1pt}
\begin{eqnarray}\label{equ19}
\bold{\Phi}_{21}\bold{y}_{1}&=&d_{11}|\bold{q}_{11}|\bold{\Phi}_{21}\tilde{\bold{q}}_{11}+d_{12}|\bold{q}_{12}|\bold{\Phi}_{21}\tilde{\bold{q}}_{12}\nonumber\\
&+&d_{21}|\bold{q}_{21}|\bold{\Phi}_{21}\tilde{\bold{q}}_{21}+d_{22}|\bold{q}_{22}|\bold{\Phi}_{21}\tilde{\bold{q}}_{22}\nonumber\\
&+&\bold{\Phi}_{21}\bold{n}_1~.
\end{eqnarray}}
Recall that $\tilde{\bold{q}}_{ij}\triangleq\frac{\bold{q}_{ij}}{|\bold{q}_{ij}|}$ specifies the direction of vector $\bold{q}_{ij}$ and is of unit norm.
Moreover, $\bold{\Phi}_{21}\bold{q}_{ij}=|\bold{q}_{ij}|\bold{\Phi}_{21}\tilde{\bold{q}}_{ij}$ for $i,j=1,2$ is the projection of $\bold{q}_{ij}$ in the null space of $\hat{\bold{q}}_{21}$, thereby having dimension of size $M-1$. In what follows, for ease of notation, we define $\bold{q'}_{ij}\triangleq\bold{\Phi}_{21}\tilde{\bold{q}}_{ij}$ for $i,j=1,2$. Again,
it should be noted that $\bold{q'}_{ij}$ for $i,j=1,2$ has one dimension less than $\tilde{\bold{q}}_{ij}$. Hence, (\ref{equ19}) can be rewritten as,
\begin{eqnarray}\label{equ20}
\bold{\Phi}_{21}\bold{y}_{1}&=&d_{11}|\bold{q}_{11}|~|\bold{q'}_{11}|~\tilde{\bold{q'}}_{11}
+d_{12}|\bold{q}_{12}|~|\bold{q'}_{12}|~\tilde{\bold{q'}}_{12}\nonumber\\
&+&d_{21}|\bold{q}_{21}|~|\bold{q'}_{21}|~\tilde{\bold{q'}}_{21}
+d_{22}|\bold{q}_{22}|~|\bold{q'}_{22}|~\tilde{\bold{q'}}_{22}\nonumber\\
&+&\bold{\Phi}_{21}\bold{n}_1,
\end{eqnarray}
where again, it is assumed $\tilde{\bold{q'}}_{ij}=\frac{\bold{q'}_{ij}}{|\bold{q'}_{ij}|}$ is the normalized vector along $\bold{q'}_{ij}$. On the other hand, the normalized vector $\tilde{\bold{q}}_{21}$ can be decomposed as follows,
\begin{eqnarray}\label{equ21}
\tilde{\bold{q}}_{21}=\sqrt{1-a_{1}}\hat{\bold{q}}_{21}+\sqrt{a_{1}}{\bold{q}}^{\perp}_{21}
\end{eqnarray}
where it is assumed $a_{1}\triangleq \sin^{2}(\angle(\tilde{\bold{q}}_{21},\hat{\bold{q}}_{21}))$. Also, ${\bold{q}}^{\perp}_{21}$ is a unit vector in the null space of $\hat{\bold{q}}_{21}$~($N(\hat{\bold{q}}_{21})$).

Therefore, noting $\bold{\Phi}_{21}$ is the projection matrix corresponding to the null space of $\hat{\bold{q}}_{21}$, it follows,
\begin{eqnarray}\label{equ22}
|\bold{q'}_{21}|~\tilde{\bold{q'}}_{21}&=&\bold{\Phi}_{21}\tilde{\bold{q}}_{21}\nonumber\\
&=&\sqrt{1-a_{1}}\bold{\Phi}_{21}\hat{\bold{q}}_{21}+\sqrt{a_{1}}\bold{\Phi}_{21}{\bold{q}}^{\perp}_{21}\nonumber\\
&=&\sqrt{a_{1}}{\bold{q}}^{\perp}_{21}.
\end{eqnarray}

Similarly, noting $\hat{\bold{q}}_{22}=\hat{\bold{q}}_{21}$, we have
\begin{eqnarray}\label{equ23}
|\bold{q'}_{22}|~\tilde{\bold{q'}}_{22}&=&\bold{\Phi}_{21}\tilde{\bold{q}}_{22}\nonumber\\
&=&\sqrt{1-a_{2}}\bold{\Phi}_{21}\hat{\bold{q}}_{22}+\sqrt{a_{2}}\bold{\Phi}_{21}{\bold{q}}^{\perp}_{22}\nonumber\\
&=&\sqrt{a_{2}}{\bold{q}}^{\perp}_{22}~,
\end{eqnarray}
where $a_{2}$ is computed as $a_{2}\triangleq \sin^{2}(\angle(\tilde{\bold{q}}_{22},\hat{\bold{q}}_{21}))$. Moreover, using Lemma~\ref{lemma:upper-bounded}, it follows,
\begin{eqnarray}\label{equ24}
E[a_1]=E[\sin^{2}\big(\angle(\tilde{\bold{q}}_{21},\hat{\bold{q}}_{21})\big)]&<& 2^{-\frac{B}{M-1}},\nonumber\\
E[a_2]=E[\sin^{2}\big(\angle(\tilde{\bold{q}}_{22},\hat{\bold{q}}_{21})\big)]&<& 2^{-\frac{B}{M-1}}.
\end{eqnarray}

Substituting (\ref{equ22}) and (\ref{equ23}) in (\ref{equ20}), it follows,
\begin{eqnarray}\label{equ25}
\bold{\Phi}_{21}\bold{y}_{1}&=&d_{11}|\bold{q}_{11}|~|\bold{q'}_{11}|~\tilde{\bold{q'}}_{11}
+d_{12}|\bold{q}_{12}|~|\bold{q'}_{12}|~\tilde{\bold{q'}}_{12}\nonumber\\
&+&d_{21}~\sqrt{a_{1}}~|\bold{q}_{21}|~{\bold{q}}^{\perp}_{21}
+d_{22}~\sqrt{a_{2}}~|\bold{q}_{22}|~{\bold{q}}^{\perp}_{22}\nonumber\\
&+&\bold{\Phi}_{21}\bold{n}_1.
\end{eqnarray}

Note that the remaining terms $d_{21}~\sqrt{a_{1}}~|\bold{q}_{21}|~\bold{\Phi}_{21}{\bold{q}}^{\perp}_{21}$ and $d_{22}~\sqrt{a_{2}}~|\bold{q}_{22}|~\bold{\Phi}_{21}{\bold{q}}^{\perp}_{21}$ are due to the fact that the interfering vectors are not perfectly aligned. However, noting (\ref{equ24}), they tend to zero as the number of quantization levels increases.

To decode $d_{11}$, the next step is to remove the second term in (\ref{equ25}). To this end, the signal $\bold{\Phi}_{21}\bold{y}_{1}$ in (\ref{equ25}) is projected to the null space of $\tilde{\bold{q'}}_{12}$, thus it should be multiplied again by $\bold{\Phi}_{12}$ defined as follows,
\begin{eqnarray}\label{equ26}
\bold{\Phi}_{12}=I-\tilde{\bold{q'}}_{12}\tilde{\bold{q'}}_{12}^{\dagger}.
\end{eqnarray}

Therefore, noting~(\ref{equ25}), it follows,
\begin{eqnarray}\label{equ27}
\bold{\Phi}_{12}\bold{\Phi}_{21}\bold{y}_{1}&=&d_{11}|\bold{q}_{11}|~|\bold{q'}_{11}|~\bold{\Phi}_{12}\tilde{\bold{q'}}_{11}\nonumber\\
&+&d_{21}\sqrt{a_{1}}~|\bold{q}_{21}|~\bold{\Phi}_{12}~\bold{q}^{\perp}_{21}\nonumber\\
&+&d_{22}\sqrt{a_{2}}~|\bold{q}_{22}|~\bold{\Phi}_{12}~\bold{q}^{\perp}_{22}\nonumber\\
&+&\bold{\Phi}_{12}\bold{\Phi}_{21}\bold{n}_1
\end{eqnarray}

Now, using~(\ref{equ27}), we are ready to compute the received SINR as follows:
\begin{eqnarray}\label{equ31}
SINR=\frac{S}{N+I}
\end{eqnarray}
where $S$ is the intended signal power. Assuming the average transmit power ($p$) is equally distributed between four data streams and considering the beamforming vectors $\bold{v}_{ij}$ for $i,j=1,2$ are of unit norm, thus\footnote{Note that each data stream is sent throughout $M$ time slots, thus the total transmit power during these time slots is $Mp$ which is equality distributed between four data streams.}  $E[|d_{ij}|^2]=\frac{Mp}{4}$ for $i,j=1,2$. Finally, referring to (\ref{equ27}) we have,
\begin{eqnarray}\label{equ32}
S=\frac{Mp}{4}~|\bold{q}_{11}|^{2}~|\bold{\Phi}_{21}\tilde{\bold{q}}_{11}|^{2}~|\bold{\Phi}_{12}\tilde{\bold{q'}}_{11}|^{2}
\end{eqnarray}
Also, the interference power is computed as,
\begin{eqnarray}\label{equ33}
I &=&
\frac{Mp}{4}~|\bold{q}_{21}|^{2}~|\bold{\Phi}_{12}\bold{q}^{\perp}_{21}|^{2}~a_{1}\nonumber\\
&+&\frac{Mp}{4}~|\bold{q}_{22}|^{2}~|\bold{\Phi}_{12}\bold{q}^{\perp}_{22}|^{2}~a_{2}.
\end{eqnarray}

Also, assuming the resulting noise power at the receiver one after applying $\bold{\Phi}_{12}\bold{\Phi}_{21}$ to the received signal vector is one\footnote{Note that we have simply normalized the transmit power to have the noise power with unit variance. On the other hand, this does not change the final result as we are dealing with fairly large values of transmit power.}, the achievable rate due to data stream $d_{11}$ can be computed as,
\begin{eqnarray}\label{equ34}
\hat C=\frac{1}{M}E[\log(1+\frac{S}{I+1})].
\end{eqnarray}
On the other hand, the throughput when the interference terms are thoroughly removed (the interfering vectors are perfectly aligned in the same direction) and assuming the noise power is one, can be computed as,
{\setlength\arraycolsep{1pt}
\begin{eqnarray}\label{equ35}
C&=&\frac{1}{M}E[\log(1+S)]\nonumber\\
&=&\frac{1}{M}E[\log(1+\frac{Mp}{4}~|\bold{q}_{11}|^{2}~|\bold{\Phi}_{21}\tilde{\bold{q}}_{11}|^{2}~|\bold{\Phi}_{12}\tilde{\bold{q'}}_{11}|^{2})]\nonumber\\
\end{eqnarray}}
In what follows, we aim at investigating the number of quantization vectors which guarantees the gap between the aforementioned achievable rates becomes negligible. To this end, we define the following Lemmas,

%

\begin{lemma}\label{lemma:image}
If $\bold{x}$ and $\bold{v}$ are two randomly chosen unit vectors of dimension $M$, then we have $E[|\Phi_{\bold{v}}\bold{x}|^2]=\frac{M-1}{M}$, where $\Phi_{\bold{v}}$ denotes the projection matrix corresponding to the null space of vector $\bold{v}$.
\end{lemma}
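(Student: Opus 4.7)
The plan is to reduce the expectation to a single inner product and then exploit isotropy of the random unit vectors. Since $\bold{v}$ is unit norm, the projector onto its null space is $\bold{\Phi}_{\bold{v}} = \bold{I} - \bold{v}\bold{v}^{\dagger}$. Being a projection, it is Hermitian and idempotent, so
\begin{equation*}
|\bold{\Phi}_{\bold{v}}\bold{x}|^{2} = \bold{x}^{\dagger}\bold{\Phi}_{\bold{v}}^{\dagger}\bold{\Phi}_{\bold{v}}\bold{x} = \bold{x}^{\dagger}\bold{\Phi}_{\bold{v}}\bold{x} = |\bold{x}|^{2} - |\bold{v}^{\dagger}\bold{x}|^{2} = 1 - |\bold{v}^{\dagger}\bold{x}|^{2},
\end{equation*}
using that $\bold{x}$ is also unit norm. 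Taking expectations, the lemma reduces to showing $E[|\bold{v}^{\dagger}\bold{x}|^{2}] = 1/M$.

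Next, I would establish this inner-product identity using rotational invariance. Conditioning on $\bold{v}$ and applying a unitary rotation $\bold{U}$ that maps $\bold{v}$ to the first canonical basis vector $\bold{e}_{1}$, the uniform distribution of $\bold{x}$ on the unit sphere is preserved, so $|\bold{v}^{\dagger}\bold{x}|^{2}$ has the same distribution as $|\bold{e}_{1}^{\dagger}(\bold{U}\bold{x})|^{2} = |\tilde{x}_{1}|^{2}$, where $\tilde{\bold{x}} = \bold{U}\bold{x}$ is again uniform on the unit sphere. By symmetry across coordinates,
\begin{equation*}
1 = E\!\left[\sum_{k=1}^{M}|\tilde{x}_{k}|^{2}\right] = M\,E[|\tilde{x}_{1}|^{2}],
\end{equation*}
so $E[|\tilde{x}_{1}|^{2}] = 1/M$. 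Substituting back gives $E[|\bold{\Phi}_{\bold{v}}\bold{x}|^{2}] = 1 - 1/M = (M-1)/M$, as claimed.

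There is no real obstacle here; the only subtle point is being careful that the randomness of $\bold{v}$ does not interact with that of $\bold{x}$ in a harmful way. Either independence of $\bold{v}$ and $\bold{x}$, or even just the isotropy of $\bold{x}$ conditional on $\bold{v}$, is enough for the rotation argument, since the conditional expectation $E[|\bold{v}^{\dagger}\bold{x}|^{2}\mid \bold{v}]$ turns out to be the deterministic constant $1/M$ regardless of $\bold{v}$, so no further averaging over $\bold{v}$ is needed. Geometrically, the identity is just the statement that a uniformly random direction in $\mathbb{C}^{M}$ places on average a fraction $(M-1)/M$ of its squared norm in any fixed $(M-1)$-dimensional subspace.
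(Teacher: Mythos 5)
Your proof is correct, and its first half coincides exactly with the paper's: both reduce the quantity to $|\bold{\Phi}_{\bold{v}}\bold{x}|^{2}=1-|\bold{v}^{\dagger}\bold{x}|^{2}$ (the paper does this by expanding $\bold{x}-\tilde{\bold{v}}(\tilde{\bold{v}}^{\dagger}\bold{x})$ term by term, you by idempotency of the projector --- same computation). Where you diverge is in establishing $E[|\bold{v}^{\dagger}\bold{x}|^{2}]=\frac{1}{M}$. The paper invokes the known distributional fact that $|\cos(\theta)|^{2}=|\tilde{\bold{v}}^{\dagger}\bold{x}|^{2}$ is $\beta(1,M-1)$-distributed for isotropic unit vectors (citing the limited-feedback literature) and then uses the mean $\frac{\alpha}{\alpha+\beta}=\frac{1}{M}$ of that distribution. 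You instead derive the expectation from scratch by rotational invariance plus the coordinate-symmetry identity $1=E[\sum_{k}|\tilde{x}_{k}|^{2}]=M\,E[|\tilde{x}_{1}|^{2}]$. Your route is more elementary and self-contained --- it needs only the mean, not the full distribution, and requires no external reference --- while the paper's route is consistent with the rest of the manuscript, which repeatedly relies on the $\beta(1,M-1)$ characterization (e.g.\ in equations (14)--(16) and Lemma~1), so citing it here costs nothing in that context. Your closing remark that only the conditional isotropy of $\bold{x}$ given $\bold{v}$ is needed is a correct and slightly sharper observation than what the paper states.
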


\begin{proof}
See Appendix~\ref{sec:appendix3}.
\end{proof}

%

\begin{lemma}\label{lemma:rategap}
The gap between $C$ and $\hat{C}$ is upper bounded as $\frac{1}{M}\log (1+\frac{p}{2}\frac{M(M-2)}{M-1}.2^{\frac{-B}{M-1}})$.
\end{lemma}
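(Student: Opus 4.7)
The plan is to bound $C-\hat C$ by isolating the penalty due to the residual interference $I$ and then reducing everything to expectations that Lemmas~\ref{lemma:upper-bounded} and~\ref{lemma:image} can dispatch. I would first rewrite
\begin{equation*}
C-\hat C \;=\; \frac{1}{M}\,E\!\left[\log(1+S)-\log\!\left(1+\frac{S}{I+1}\right)\right]
\;=\;\frac{1}{M}\,E\!\left[\log\frac{(1+S)(1+I)}{1+S+I}\right].
\end{equation*}
Since $I\ge 0$ implies $1+S\le 1+S+I$, the ratio inside the log is at most $1+I$, giving the clean estimate $C-\hat C \le \frac{1}{M}E[\log(1+I)]$. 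This is the key simplification: the intended signal power $S$ drops out entirely, and the rate gap is controlled solely by $E[I]$.

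Next I would apply Jensen's inequality (concavity of $\log$) to push the expectation inside, yielding $C-\hat C \le \frac{1}{M}\log(1+E[I])$. This matches the shape of the target bound, so it only remains to verify that $E[I]\le \frac{p}{2}\cdot\frac{M(M-2)}{M-1}\cdot 2^{-B/(M-1)}$. Starting from (\ref{equ33}) and factoring the expectation over the three independent ingredients in each summand, I would write
\begin{equation*}
E[I]\;=\;\frac{Mp}{4}\sum_{k=1}^{2} E\!\left[|\bold q_{2k}|^{2}\right]\,E\!\left[|\bold\Phi_{12}\bold q_{2k}^{\perp}|^{2}\right]\,E[a_k].
\end{equation*}

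I would then dispatch each factor in turn. The channel factor $E[|\bold q_{2k}|^{2}]=E[|\bold H_{1k}\bold v_{2k}|^{2}]=1$ because $\bold H_{1k}$ is diagonal with unit-variance entries and $\bold v_{2k}$ is a unit vector. The geometric factor is handled by Lemma~\ref{lemma:image} applied in the $(M-1)$-dimensional null space of $\hat{\bold q}_{21}$, where $\bold q_{2k}^{\perp}$ and $\tilde{\bold q'}_{12}$ are isotropic unit vectors, giving $E[|\bold\Phi_{12}\bold q_{2k}^{\perp}|^{2}]=\frac{M-2}{M-1}$. The quantization factor $E[a_k]<2^{-B/(M-1)}$ is exactly Lemma~\ref{lemma:upper-bounded}. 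Multiplying everything out yields
\begin{equation*}
E[I]\;<\;\frac{Mp}{4}\cdot 2\cdot 1\cdot\frac{M-2}{M-1}\cdot 2^{-B/(M-1)}\;=\;\frac{p}{2}\cdot\frac{M(M-2)}{M-1}\cdot 2^{-B/(M-1)},
\end{equation*}
and combining with the Jensen step completes the proof.

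The main obstacle is the factorization of the expectation in the expression for $E[I]$. In particular, $a_k=\sin^{2}\angle(\tilde{\bold q}_{2k},\hat{\bold q}_{21})$, the residual direction $\bold q_{2k}^{\perp}$, and the projector $\bold\Phi_{12}$ built from $\tilde{\bold q'}_{12}$ are not obviously mutually independent: $\bold q_{2k}^{\perp}$ lies in the same null space that defines the domain of $\bold\Phi_{12}$, and both are constructed from the same channel realisation. The argument relies on the fact that, conditional on $\hat{\bold q}_{21}$, isotropy of $\tilde{\bold q}_{2k}$ and of $\tilde{\bold q'}_{12}$ makes the unit-direction $\bold q_{2k}^{\perp}$ uniform on the sphere in $N(\hat{\bold q}_{21})$ and independent of both the magnitude-type quantities $a_k$, $|\bold q_{2k}|$ and of $\tilde{\bold q'}_{12}$. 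If one is willing to grant this conditional-isotropy argument (which is implicit in the RVQ literature cited for Lemma~\ref{lemma:upper-bounded}), the remaining steps are purely computational.
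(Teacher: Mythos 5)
Your proposal is correct and follows essentially the same route as the paper's own proof: bound $C-\hat C$ by $\tfrac{1}{M}E[\log(1+I)]$ using $I\ge 0$, apply Jensen's inequality, and then evaluate $E[I]$ term by term via $E[|\bold q_{2k}|^2]=1$, Lemma~\ref{lemma:image} in the $(M-1)$-dimensional null space, and Lemma~\ref{lemma:upper-bounded}. The only difference is that you explicitly flag the independence/conditional-isotropy assumption needed to factor the expectation of each interference term, a step the paper applies silently.
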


\begin{proof}
Subtracting~(\ref{equ34}) from~(\ref{equ35}), it follows,
{\setlength\arraycolsep{1pt}
\begin{eqnarray}\label{equ36}
g(\textrm{SNR}) &=& C-\hat C\nonumber\\
&=&\frac{1}{M}E[\log(1+S)]-\frac{1}{M}E[\log(1+\frac{S}{I+1})]\nonumber\\
&=& \frac{1}{M}E[\log(1+S)]-\frac{1}{M}E[\log(1+S+I)]\nonumber\\
&~&+\frac{1}{M}E[\log(I+1)]\nonumber\\
&\stackrel{(a)}{\leq}& \frac{1}{M}E[\log(1+S)]-\frac{1}{M}E[\log(1+S)]\nonumber\\
&~&+\frac{1}{M}E[\log(I+1)]\nonumber\\
&=&\frac{1}{M}E[\log(1+I)]
\end{eqnarray}}
where $(a)$ is due to the fact that $I$ is a positive quantity and the logarithm function is monotonically increasing function. Also, due to Jensen's inequality, we have
\begin{eqnarray}\label{equ37}
g(\textrm{SNR})\leq \frac{1}{M}E[\log(1+I)] \leq \frac{1}{M}\log(E[1+I]).
\end{eqnarray}
Substituting~(\ref{equ33}) in~(\ref{equ36}), it follows,
{\setlength\arraycolsep{1pt}
\begin{eqnarray}\label{equ38}
g(\textrm{SNR}) &\leq& \frac{1}{M}\log(E[1+I])\nonumber\\
&=& \frac{1}{M}\log (1 \nonumber\\
&+& E[\frac{Mp}{4}~|\bold{q}_{21}|^{2}~|\bold{\Phi}_{12}\bold{q}^{\perp}_{21}|^{2}~a_{1}]\nonumber\\
&+& E[\frac{Mp}{4}~|\bold{q}_{22}|^{2}~|\bold{\Phi}_{12}\bold{q}^{\perp}_{22}|^{2}~a_{2}])
\end{eqnarray}}
One can readily verify that $E[|\bold{q}_{ij}|^{2}]$ for $i,j=1,2$ is equal to $1$. Thus, referring to Lemma ~\ref{lemma:image} for $E[|\bold{\Phi}_{12}\bold{q}^{\perp}_{21}|^{2}]$ and $E[|\bold{\Phi}_{12}\bold{q}^{\perp}_{22}|^{2}]$, We have:
{\setlength\arraycolsep{1pt}
\begin{eqnarray}\label{equ39}
g(\textrm{SNR}) &\leq& \frac{1}{M}\log (1 \nonumber\\
&+&\frac{Mp}{2}\frac{M-2}{M-1}.2^{\frac{-B}{M-1}})\nonumber\\
&=& \frac{1}{M}\log (1+\frac{p}{2}\frac{M(M-2)}{M-1}2^{\frac{-B}{M-1}})
\end{eqnarray}}
\end{proof}
It is worth mentioning that $\bold{q}^{\perp}_{21}$ and $\bold{q}^{\perp}_{22}$ are of dimension $M-1$, thus $E[|\bold{\Phi}_{12}\bold{q}^{\perp}_{21}|^{2}]=E[|\bold{\Phi}_{12}\bold{q}^{\perp}_{22}|^{2}]=\frac{M-2}{M-1}$.

Finally, to have the rate gap approaching zero and noting (\ref{equ39}), it follows $B=(M-1)\log(p)+\theta(\log\log(p))$ and noting $M=3$ for two-user X channel, thus $B=2\log(p)+\theta(\log\log(p))$, where according to the knuth's notation, $f(n)=\theta{(g(n))}$ means $\textrm{lim}|\frac{f(n)}{g(n)}|=c\leq \infty$ as $n$ tends to infinity\cite{knuth}.

\section{The Impact of Low Feedback rate}\label{sec:Quantrate}
The preceding section yields the appropriate feedback rate, i.e. $(M-1)\log(p)+\theta(\log\log(p))$, which is employed so as to do not have multiplexing gain lost. On the other hand by decreasing quantization rate, the magnitude of quantization error becomes larger and can not be sure that there is no multiplexing gain lost. In order to determine such multiplexing lost, this section aims to investigate the impact of low feedback rate on the multiplexing gain in two user~X channel. The remains of this section follows by analysis of rise and reduction in the amount of quantization rate for large enough $\textrm{SNR}$ which indicate the power constraint.

To scale \emph{B} at rate larger than $(M-1)\log(p)$, i.e. $(M-1)\log(p)+\theta(\log\log(p))$, The rate gap , i.e. $g(\textrm{SNR})$, tends to zero
{\setlength\arraycolsep{1pt}
\begin{eqnarray}\label{equ42}
\lim_{p \rightarrow \infty} g(\textrm{SNR}) \leq \lim_{p \rightarrow \infty}\frac{1}{M}\log (1+\frac{p}{2}\frac{M(M-2)}{M-1}2^{\frac{-B}{M-1}}))=0\nonumber\\
\end{eqnarray}}
Thus the throughput of the proposed method converges to the the throughput when the interference signals are perfectly aligned in the same direction. In the other words, by quantization rate which is grater than \emph{B}, the magnitude of quantization error is venial.

If \emph{B} is scaled at a rate lower than $(M-1)\log(p)$, i.e. $B=\alpha \log(p)$ and $\alpha \leq (M-1)$ , the throughput achieves a multiplexing gain of $\frac{4}{3}(\frac{\alpha}{M-1})$ in two user X channel.
To prove this, the remains of this section approaches the degradation of quantization rate for one message intuitively and the result is scaled finally by $\frac{4}{3}$ which is the achieved \emph{DoF} of the proposed method for two user X channel. For more details see \emph{Theorem 4} of \cite{Jindal-finite_feedback}.

Therefore in the infancy of attaining of the \emph{DoF} of proposed method for degraded quantization rate, received signal to noise-interference ratio, i.e. SINR, in order to evaluate throughput should be extracted. As was obtain in equations~\ref{equ32} and~\ref{equ33} the signal and interference power is proportional with $p$ and the product of $p$ and quantization error respectively. Since the expectation of quantization error is upper bounded by $2^{-\frac{B}{M-1}}$ and \emph{B} is scaled by $\alpha \log(p)$ supposedly, then the interference power scales as $p^{(1-\frac{\alpha}{M-1})}$. So SINR is proportional with $p^{\frac{\alpha}{M-1}}$ and throughput can be obtained.
\begin{eqnarray}\label{equ43}
\hat C&=&E[\log(1+SINR)]\nonumber\\
&=&E[\log(1+p^{\frac{\alpha}{M-1}})]
\end{eqnarray}
and yields the \emph{DoF} as
\begin{eqnarray}\label{equ7}
DoF&=&\lim_{p \to \infty} \frac{\hat C(p)}{\log(p)}\nonumber\\
&=&\frac{\alpha}{M-1}
\end{eqnarray}
where \emph{DoF} is for one message. Since the propose method can achieve $\frac{4}{3}$ multiplexing gain thus the total \emph{DoF} can be achieved is $\frac{4}{3}(\frac{\alpha}{M-1})$.


\section{Conclusion}
This paper aims to investigate the required feedback rate  to achieve the Degrees of Freedom~(~\emph{DoF}) of the generalized ergodic X channel. To this end, through using the notion of interference alignment technique, a fixed precoding approach is proposed which attempts to align the quantized version of non-intended signals~(interferers) in the same direction at the receiver part. Finally, it is shown a feedback rate of $B=2\log(p)+\theta(\log\log(p))$ is adequate to achieve the~\emph{DoF}.

\section{Acknowledgment}\label{sec:acknowledgment}
The authors would like to greatly knowledge the financial support of Iran Telecommunication Research Center~(ITRC) under grant number 3133/500.

\appendices
\section{Proof of lemma~\ref{lemma:upper-bounded}}\label{sec:appendix1}
In this appendix, we prove the expectation  quantization error can be upper bounded by $2^{-\frac{B}{M-1}}$.
{\setlength\arraycolsep{2pt}
\begin{eqnarray*}
E[\sin^{2}(\angle(\tilde{\bold{q}}_{i},\hat{\bold{q}}_{i})]&\stackrel{a}{=}&\int_{0}^{1}((1-x^{M-1})^{2^{B}})dx\\
&\stackrel{b}{=}&\frac{1}{M-1}\beta(2^{B}+1,\frac{1}{M-1})\nonumber\\
&\stackrel{c}{=}&\frac{\frac{1}{M-1}\Gamma(2^{B}+1)\Gamma(\frac{1}{M-1})}{\Gamma(2^{B}+1+\frac{1}{M-1})}\\
&\stackrel{d}{=}&\frac{2^{B}\Gamma(2^{B})\Gamma(1+\frac{1}{M-1})}{\Gamma(2^{B}+1+\frac{1}{M-1})}\\
&\stackrel{e}{\leq}& \frac{2^{B}\Gamma(2^{B})}{\Gamma(2^{B}+1+\frac{1}{M-1})}\\
&\stackrel{f}{=}&\frac{\Gamma(2^{B}+1)}{\Gamma(2^{B}+1+\frac{1}{M-1})}\\
&\stackrel{g}{<}& (2^{B}+\frac{M}{2(M-1)})^{-\frac{1}{M-1}}\\
\end{eqnarray*}}
Where $a$ is hold on with due attention to get integrations by part from $E(x)= \int_{0}^{1} xf(x) dx$ and $b$ is followed by the integral representation for the beta function is given in~\cite{GuptaHandbook,Jindal-finite_feedback}:
\begin{displaymath}
\beta \left( c,\frac{a}{b}\right)=b\int_{0}^{1} x^{a-1}\left(1-x^{b}\right)^{c-1}dx,\qquad a>0, b>0, c>0.
\end{displaymath}

The equality $c$, $d$ and $f$ have been given from of definition of beta and gamma function, sequentially. The inequality $e$ is based on that for $1\leq x\leq2$, the gamma function is not larger than one~\cite{Davis-gamma_function,Jindal-finite_feedback}. The last inequality is reached by implementing Kershaw's inequality for the gamma function~\cite{Kershaw-gamma_function,Jindal-finite_feedback}:
\begin{displaymath}
\frac{\Gamma\left(x+s\right)}{\Gamma\left(x+1\right)}< \left(x+\frac{s}{2}\right)^{s-1},\qquad \forall x>0, 0<s<1.
\end{displaymath}
In the above inequality if $x$ and $s$ are replaced by $2^{B}+\frac{1}{M-1}$ and $1-\frac{1}{M-1}$, then we have:
\begin{displaymath}
\frac{\Gamma\left(2^{B}+1\right)}{\Gamma\left(2^{B}+1+\frac{1}{M-1}\right)}< \left(2^{B}+\frac{M}{2\left(M-1\right)}\right)^{-\frac{1}{M-1}},
\end{displaymath}
And due to the fact that the function $\left(.\right)^{-\frac{1}{M-1}}$ is decreasing,we have:

\begin{displaymath}
E[\sin^{2}(\angle(\tilde{\bold{q}}_{kl},\hat{\bold{q}}_{kl})]<2^{-\frac{B}{M-1}}.
\end{displaymath}

\section{Proof of lemma~\ref{lemma:image}}\label{sec:appendix3}
In this appendix, we want to show the expected value of the squared magnitude of the projection of $M$ dimensional unit norm vector $\bold{x}$ on the null space of another isotropically distributed vector $\bold{v}$ is equal to $\frac{M-1}{M}$. Assuming $\tilde{\bold{v}}=\frac{\bold{v}}{|\bold{v}|}$ and noting $\bold{\Phi}_{\bold v}=\bold{\Phi}_{\tilde{\bold v}}=\bold{I}-\tilde{\bold v}\tilde{\bold v}^{\dagger}$, it follows,
\begin{figure}[htp]
\centering
\epsfig{file=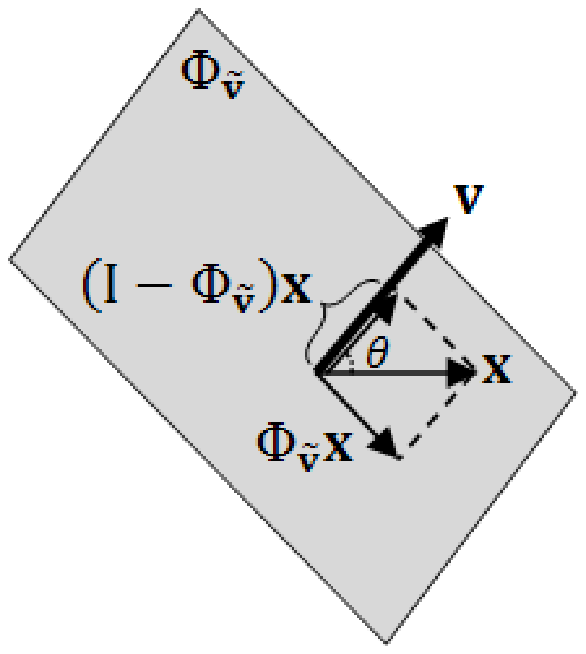,width=0.5\linewidth,clip=}
\caption{The projection of $M$ dimensional vector $\bold{x}$ on the null space of vector $\bold{v}$.}
\label{fig:3}
\end{figure}

\begin{eqnarray}\label{equa1}
\bold{\Phi}_{\tilde{\bold v}}\bold x = \bold x - \tilde{\bold v}(\tilde{\bold v}^{\dagger}.\bold x).
\end{eqnarray}
And due to inner product property, we have
\begin{eqnarray*}
\cos(\theta)= \tilde{\bold v}^{\dagger}.\bold x~,
\end{eqnarray*}
where it is assumed the angle between vectors $\bold{x}$ and $\bold{v}$ is $\theta$ (see Figure~\ref{fig:3}). Since $\bold{x}$ and $\bold{v}$ are isotropically distributed then the distribution of $|\cos(\theta)|^{2}$ is beta with parameters $1$ and $M-1$ parameters~\cite{Jindal-finite_feedback}. Thus, noting (\ref{equa1}) and assuming $\bold{x}$ is of unit norm, we have,
\begin{eqnarray*}
|\bold{\Phi}_{\tilde{\bold v}}\bold x|^2 &=& \bold x^{\dagger}\bold x+(\bold x^{\dagger}\tilde{\bold v})\tilde{\bold v}^{\dagger}\tilde{\bold v}(\tilde{\bold v}^{\dagger}\bold x)-2\bold x^{\dagger}\tilde{\bold v}\tilde{\bold v}^{\dagger}\bold x\\
&=& 1-|\cos(\theta)|^{2}
\end{eqnarray*}

Finally, we have,
\begin{eqnarray*}
E[|\bold{\Phi}_{\tilde{\bold v}}\bold x|^2] = 1 - E[|\cos(\theta)|^{2}] \stackrel{(a)}{=} \frac{M-1}{M}.
\end{eqnarray*}
which $(a)$ is due to the fact that the expectation of a beta distribution
with parameters $\alpha$ and $\beta$ is equal to $\frac{\alpha}{\alpha+\beta}$~\cite{GuptaHandbook}. In our case, we have $\alpha=1$ and $\beta=M-1$.
\bibliographystyle{IEEEtran}
\bibliography{paper}

\end{document}